\newtheorem{definition}{Definition}
\newtheorem{theorem}{Theorem}
\newtheorem{lemma}{lemma}
\begin{document}

%\preprint{APS/123-QED}

\title{Synchronization stability and circuit experiment of hyperchaos with time delay using impulse control}

\author{Hai-Peng~Ren}\thanks{Corresponding author:Hai-Peng Ren,renhaipeng@xaut.edu.cn}
 \affiliation{Shaanxi Key Laboratory of Complex System Control and Intelligent Information Processing, Xi¡¯an University of Technology, Xi¡¯an, 710048, China}
  \affiliation{School of Mechatronics Engineering, Xi'an Technological University, Xi¡¯an, 710048, China }
\author{Kun Tian}
 \affiliation{Shaanxi Key Laboratory of Complex System Control and Intelligent Information Processing, Xi¡¯an University of Technology, Xi¡¯an, 710048, China}

\author{Chao Bai}
 \affiliation{School of Mechatronics Engineering, Xi'an Technological University, Xi¡¯an, 710048, China }
% \affiliation{Center for Transport and Devices, TU Dresden, 01069 Dresden, Germany}
%\footnote{Corresponding author:Hai-Peng Ren,renhaipeng@xaut.edu.cn}

\date{\today}

\begin{abstract}
Secure communication using hyperchaos has a better potential performance, but hyperchaotic impulse circuits synchronization is a challenging task. In this paper, an impulse control method is proposed for the synchronization of two hyperchaotic Chen circuits. The sufficient conditions for the synchronization of hyperchaotic systems using the impulse control are given. The upper bound of the impulse interval is derived to assure the synchronization error system to be asymptotically stable. Simulation and circuit experiment show the correctness of the analysis and feasibility of the proposed method.
\end{abstract}

\maketitle

\section{\label{Intro}Introduction}
Since Pecora and Carroll presented a pioneering work of chaos synchronization \citep{1}, chaos synchronization plays an important role in chaos control. In recent years, chaos synchronization methods have been developed rapidly, such as unidirectionally coupling method \citep{2}, OGY (Ott, Grebogi and Yorke) method \citep{3}, active control method \citep{4}, backstepping control method \citep{6}, nonlinear feedback control \citep{7}, ${H_\infty}$ method \citep{9}, adaptive feedback control \citep{10,11}, adaptive variable structure control \citep{12} and impulse control method \citep{13}. It has been studied theoretical and experimental for many chaotic systems, such as fractional-order system \citep{14}, and two identical or different chaotic systems \citep{5} and network oscillators \citep{kelly}. The application of chaos synchronization has been applied in engineering field, e.g., secure communication \citep{15,16}, motor synchronization and vibration compactor \citep{zhuanli}.

Time delay could generate an infinite dimensional hyperchaos \citep{18}, Chen system with time delay is an example demonstrating multiple kinds of attractors, including the single scroll attractor \citep{123add}, the double scroll attractors \citep{18,19} and the composite multi-scroll attractors \citep{20}, which possesses multiple positive Lyapunov exponents \citep{123add}, more complex dynamics, and better application potential. Therefore it can be applied to encryption, secure communication, etc.. However, the synchronization of the hyperchaotic systems is a challenge task.

Because the impulse control only demands for small fragment of the state variables from driving system, it permits stabilization and synchronization of chaotic systems with better energy efficiency and higher information security \citep{13,17add,18add,19add}. By employing Lyapunov-Razumikhin theorem, the global exponential stability are provided for the impulse
differential equations \citep{22}. The proof of the uniform asymptotic stability for impulse synchronization of a lot of low dimension systems are presented in previous works \citep{13a,21}. Due to a lack of effective tool and impulse delay differential equation analysis, there is rare investigation about impulse synchronization stability for the chaos system with time delay. In this paper, an impulse control method is proposed to implement hyperchaos synchronization.

The contribution of this paper is to implement the synchronization between two infinite dimensional hyperchaos given by impulse delay differential equations. On the one hand, the uniform asymptotic stability with impulse control is investigated and a sufficient condition for the impulse synchronization is derived based on the stability theory of impulsive delay differential equation. On the other hand, we verify the correctness of the analysis and the effectiveness of the method by both simulation and experiment. From our best knowledge, there is no theoretical together with experimental result about this method. Furthermore, the result is extended to the systems with multiple time delay.

The rest of the paper is organized as follows. Section 2 gives the stability analysis of the impulse synchronization. Section 4 gives the simulation and experimental results of hyperchaotic system synchronization of the single-scroll attractor in Chen system with linear time delay using the proposed method. Conclusions are given in Section 5.

\section{Impulse synchronization of time delay induced hyperchaotic attractor}
\subsection{Preliminary of impulse delay-differential equations}

In general, an impulsive differential equation is given by
\begin{equation}\label{1}
\left\{ \begin{array}{l}
 {\bf{\dot x}} = {\bf{f}}(t,\bf{x}),\begin{array}{*{20}{c}}
   {} & {} & {} & {} & {} & {} & {} & {}& {} & {} & {} & {} & {} & {}& {}\\
\end{array}\begin{array}{*{20}{c}}
   {} & {}  \\
\end{array}  \it{{t} \ne {t_k}} \\
 \Delta {\bf{x}} = {\bf{x}}(t_k^ + ) - {\bf{x}}(t_k^ - )=\bm{I_k}({\bf{x}}),{\rm{ }}\begin{array}{*{20}{c}}
   {} & {} & {} \\
\end{array}{\rm{ }}t = {t_k},k \in N \\
 {\bf{x}}(t_0^ + ) = \phi  \\
 \end{array} \right.
\end{equation}
where ${{\bf{x}}(t) \in {R^n}}$ is  state vector, ${{\bf{f}}:{R_ + } \times S\left( \rho \right) \to {R^n}}$, ${\bm{I_k}:S\left( \rho  \right) \to {R^n}}$, ${{\bf{f}}}$ is continuous function vector on ${\left( {{t_{k - 1}},{t_k}} \right] \times S\left( \rho  \right)}$, the impulse time ${{t_k}}$ satisfy ${0 < {t_0} < {t_1} < {t_2} < ...}$ and ${{\lim _{k \to\infty }}{t_k} = \infty}$.

Assuming that, for all ${k}$, ${{\bf{f}}\left( {t,0} \right) \equiv 0}$ and ${\bm{I_k}\left( 0 \right) = 0}$, equation \eqref{1} has a trivial solution. The following definitions and lemmas are introduced \citep{21}:

${{K_1} = \left\{ {g \in C\left( {{R_ + },{R_ + }} \right)|g\left( 0 \right) = 0,g\left( s \right) > 0{\kern 1pt} {\kern 1pt} {\kern 1pt} {\kern 1pt}, \forall {\kern 1pt} s > 0} \right\}}$

${{K_3} = \left\{ {g \in C\left( {{R_ + },{R_ + }} \right)|g\left( 0 \right) = 0,g\left( s \right) > 0{\kern 1pt} {\kern 1pt} , \forall{\kern 1pt} {\kern 1pt} {\kern 1pt} s > 0{\kern 1pt} } \right\}}$, ${{g\left( s \right)}}$ is a non-decreasing function.

${S\left( \rho  \right) = \left\{ {x \in {R^n}\left| {\left\| x \right\| < \rho } \right.} \right\}}$, where ${\left\|  \cdot  \right\|}$ represents ${{R^n}}$ space euclidean norm.

\begin{definition}
Let ${{V_0} = \left\{ {V:{R_ + } \times R_ + ^n \to {R_ + }} \right\},}$ ${V \in {V_0},}$  for ${(t,x(t)) \in}$ ${\left( {nT,\left( {{\rm{n}} + {\rm{1}}} \right)T} \right)}$ ${\times R_ + ^n,}$ system \eqref{1} can be described as ${\dot{\bf{x}}(t) = f(t,{\bf{x}}(t))}$, then the upper right derivative of the solution of system \eqref{1} is defined as ${D^ + }V[t,{\bf{x}}(t)]$ $ = \mathop {\lim }\limits_{h \to {0^ + }} \sup \frac{1}{h}\left\{ {V\left[ {t + h,{\bf{x}}(t) + hf\left( {t,{\bf{x}}\left( t \right)} \right)} \right] - V\left( {t,{\bf{x}}\left( t \right)} \right)} \right\}$.
\end{definition}

\begin{definition}
 Assume that ${r \in N}$, ${D \subset R}$ and ${F \subset R}$, ${PC\left( {D,F} \right)}$ denotes a piecewise continuous function from ${D}$ to ${F}$, namely, if ${\phi  \in PC\left( {D,F} \right)}$, when ${t \in D,t \ne {t_k}}$, ${\phi }$ is a continuous function, except ${t={t_k}}$, ${\phi }$ is a discontinuous function, but left side continuous. Denote ${P{C^r}\left( {D,F} \right)}$ as ${r}$-order piecewise global differentiable function from ${D}$ to ${F}$, namely if ${\phi  \in P{C^r}\left( {D,F} \right)}$, then ${\phi :D \to F,\frac{{{d^r}\phi }}{{d{t^r}}} \in PC\left( {D,F} \right)}$.
\end{definition}

\begin{lemma}
 Assume existing ${a,b,c \in {K_1}}$, ${g \in {K_3}}$, ${p \in PC\left( {{R_ + },{R_ + }} \right)}$ and  ${V:\left[ { - r,\infty } \right) \times S\left( \rho  \right)}$\\${\to {R_ + }}$, where ${V}$ is continuous on ${\left( { - r,{t_0}} \right] \times S\left( \rho  \right)}$ and
 ${\left( {{t_{k - 1}},{t_k}} \right] \times S\left( \rho  \right)}$, ${k = 1,2,...}$, for each ${x \in S\left( \rho  \right)}$, ${k = 0,1,}$ ${2,....,}$ ${{\lim _{\left( {t,y} \right) \to \left( {t_k^ - ,x} \right)}}V\left( {t,y} \right) = V\left( {t_k^ - ,x} \right)}$ exists; ${V}$ satisfying Lipschitz condition, is restricted on ${{R_ + } \times S\left( \rho  \right)}$, if the following conditions hold:

(1) ${b\left( {\left| {\bf{x}} \right|} \right) \le V\left( {t,{\bf{x}}} \right) \le a\left( {\left| {\bf{x}} \right|} \right)}$, ${\left( {t,x} \right) \in}$ ${\left[ { - r,} \right.\left. \infty  \right) \times S\left( \rho  \right)}$;

(2)${{D^ + }V(t,\phi (0)) \le p(t)c(V(t,\phi (0)))}$, for all ${t \ne {t_k}}$ in ${{R_ + }}$, and ${\phi  \in PC\left( {[ - r,0],S\left( \rho  \right)} \right)}$ whenever
${V(t,\phi (0)) \ge g(V(t + s,\phi (s)))}$ for ${s \in [ - r,0]}$, where ${g:R_ + ^n }$ ${\to R_ + ^n}$ is monotone increasing;

(3)
 ${V\left( {{t_k},\phi \left( 0 \right) + {I_k}} \right) \le g\left( {V\left( {t_{_k}^ - ,\phi \left( 0 \right)} \right)} \right)}$ for all
  ${\left( {{t_k},\phi } \right) \in {R_ + } \times PC\left. {\left( {\left[ { - r,0} \right]} \right., S\left( {{\rho _1}} \right)} \right)}$, for which ${\phi \left( {{0^ - }} \right) = \phi \left( 0 \right)}$;

(4) ${\delta  = {\sup _{k \in z}}\left\{ {{\tau _k} - {\tau _{k - 1}}} \right\} < \infty}$, where ${\delta}$ is the impulse interval:
\begin{center}
${{M_1} = {\sup _{t > 0}}\int_t^{t + \delta } {p\left( s \right)} ds < \infty}$,\\
${{M_2} = {\inf _{q > 0}}\int_{g\left( q \right)}^q {\frac{{ds}}{{c\left( s \right)}}} > {M_1}}$,\\
\end{center}
then, the  trivial solution of system \eqref{1} is uniformly asymptotically stable.
\end{lemma}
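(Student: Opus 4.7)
The plan is to combine the standard Razumikhin technique for delay-differential equations with the impulsive contraction at the discrete instants $t_k$, using the sandwich bound (1) to transfer stability of $V$ to stability of $\bf{x}$. Condition (2) controls the growth of $V$ between impulses (only when the Razumikhin-type inequality $V(t,\phi(0))\ge g(V(t+s,\phi(s)))$ holds), condition (3) forces a contraction of $V$ at each impulse, and condition (4) quantitatively guarantees via $M_2>M_1$ that the per-impulse contraction dominates the inter-impulse growth.

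For uniform stability, given $\varepsilon\in(0,\rho)$ I would pick $\eta>0$ with $a(\eta)<b(\varepsilon)$, so that $\|\phi\|<\eta$ implies $V(t_0^+,\phi(0))\le a(\eta)<b(\varepsilon)$. A contradiction argument handles the rest: if $V(t,\bf{x}(t))$ first reaches the level $b(\varepsilon)$ at some $t^*$, then at every instant on $[t_0,t^*]$ where $V$ is genuinely trying to grow the Razumikhin hypothesis must hold automatically, so the differential inequality from (2), the impulse bound (3), and $M_2>M_1$ together rule out any such crossing. The sandwich in (1) then yields $|\bf{x}(t)|<\varepsilon$ for all $t\ge t_0$.

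For attractivity, the key step is to show that the sequence $V(t_k^+)$ decays to zero with a controllable per-step gap. Between impulses, on intervals where the Razumikhin hypothesis holds the inequality $D^+V\le p(t)c(V)$ integrates to
\begin{equation*}
\int_{V(t_{k-1}^+)}^{V(t_k^-)}\frac{ds}{c(s)}\le\int_{t_{k-1}}^{t_k}p(s)\,ds\le M_1,
\end{equation*}
and at $t_k$ condition (3) gives $V(t_k^+)\le g(V(t_k^-))$. Combining with the lower bound
\begin{equation*}
\int_{g(V(t_k^-))}^{V(t_k^-)}\frac{ds}{c(s)}\ge M_2>M_1
\end{equation*}
coming from the definition of $M_2$, one obtains $V(t_k^+)<V(t_{k-1}^+)$ by a margin controlled by $M_2-M_1$. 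Iteration then yields $V(t_k^+)\to 0$, hence $\bf{x}(t)\to 0$ uniformly via (1).

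The main obstacle will be a rigorous treatment of the Razumikhin hypothesis. Because the estimate in (2) is conditional, one must verify that precisely at the times where $V$ attempts to exceed the $g$-image of its recent past, the hypothesis $V(t,\phi(0))\ge g(V(t+s,\phi(s)))$ does hold (leveraging monotonicity of $g\in K_3$), so that the differential inequality kicks in exactly when needed. Additional care is required at the impulse points because of the left-continuity convention of Definition 2 and because the delay window $[t-r,t]$ may straddle several impulse instants; organizing the inter-impulse integral estimates and the impulsive drops into a clean monotone decrease of $V(t_k^+)$ is the step that will demand the most technical bookkeeping.
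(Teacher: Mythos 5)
First, a point of comparison: the paper does not prove this lemma at all --- it is quoted (with some typographical corruption) from Liu and Ballinger \citep{21} and is only invoked later in the proof of Theorem 1. So there is no in-paper proof to measure your plan against; your sketch follows the same Razumikhin-type strategy as the cited source, and the overall mechanism you describe (impulses contract $V$ through $g$, inter-impulse growth is budgeted by $M_1$, and $M_2>M_1$ makes the contraction win) is the right one.

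As a proof plan, however, it is loose exactly where the real work lies. (i) In the stability part, choosing $\eta$ with $a(\eta)<b(\varepsilon)$ is not sufficient: if $V(t_0^+)$ already lies in the band $\left[\,g(b(\varepsilon)),\,b(\varepsilon)\right)$, the climb to level $b(\varepsilon)$ can start inside the band, the accumulated integral $\int ds/c(s)$ along that climb can be far smaller than $M_1$, and no contradiction with $M_2>M_1$ arises. One needs $a(\eta)\le g(b(\varepsilon))$ (possible since $g(b(\varepsilon))>0$), and the contradiction must be run from the last time $V$ equals $g(b(\varepsilon))$, observing that any impulse occurring during the climb sends $V$ back below $g(b(\varepsilon))$, so the climb is confined to a single impulse interval where $\int_t^{t+\delta}p\le M_1$. (ii) In the attractivity part, the displayed estimate $\int_{V(t_{k-1}^+)}^{V(t_k^-)}ds/c(s)\le M_1$ is unjustified: condition (2) bounds $D^+V$ only when the Razumikhin hypothesis holds, and it need not hold throughout $(t_{k-1},t_k)$; where it fails you have no derivative bound at all. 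The correct argument integrates only over excursions of $V$ inside a band $[g(q),q]$, where the hypothesis is automatic because the entire delay window (which may straddle several impulse intervals) is already below $q$. Hence the claimed strict per-step decrease of $V(t_k^+)$ and its iteration to zero are not established as written; moreover, \emph{uniform} asymptotic stability requires the number of impulse intervals needed to pass below each threshold to depend only on the thresholds, i.e.\ a quantified induction on a decreasing sequence of levels, not merely $V(t_k^+)\to0$. These are precisely the steps you defer as ``bookkeeping,'' but they constitute the main content of the proof in \citep{21}.
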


\subsection{Hyper-chaos synchronization using impulse control}
For the drive system :
\begin{equation}\label{eq12}
{\bf{\dot x}}(t) = {\bm A}{\bf{x}}\left( t \right) + {\bm B}{\bf{x}}(t - \tau ) + {\bm{\varphi }} \left( {{\bf{x}}\left( t \right)} \right),
\end{equation}
where ${{\bm A} \in {R^{n \times n}}}$ is the state matrix, ${{\bm B}}$ is the delay gain matrix, ${{\bm{\varphi }} \left( {\bf{x}} \right)}$ is the nonlinear function, ${\tau}$ is the delay time. To achieve synchronization with system \eqref{eq12}, a response system using impulse control is given by:
\begin{equation}\label{eq13}
\left\{ {\begin{array}{*{20}{c}}
   {{\bf{\dot x'}}(t) = {\bm A}{\bf{x'}}\left( t \right) + {\bm B}{\bf{x'}}(t - \tau ) + {\bm{\varphi }} '\left( {{\bf{x}}\left( t \right)} \right)} & {t \ne {t_k}}  \\
   {\Delta {\bf{x'}} = {\bf{x'}}(t_k^ + ) - {\bf{x'}}(t_k^ - ) = {{\bm{C}}_k}({\bf{x'}}({t_k}) - {\bf{x}}({t_k}))} & {t = {t_k}}  \\
\end{array}} \right.,
\end{equation}
where ${{\bf{x'}}(t)}$ represents the state vector of response system, ${\Delta {\bf{x'}}}$ represents the impulse control force at time ${t_k}$, ${{\bm {C_k}} \in {R^{n \times n}}}$ represents impulse control matrix. The remaining variables are the same equation \eqref{eq12}. From drive system \eqref{eq12} and response system \eqref{eq13}, we obtain the synchronization error system given by:

\begin{equation}\label{4}
\left\{ \begin{array}{l}
 {\bf{\dot e}}(t) = {\bm A}{\bf{e}}\left( t \right) + {\bm B}{\bf{e}}(t - \tau ) + {\bm{\varphi }} \left( {{\bf{x}}(t),{\bf{x'}}(t)} \right)\begin{array}{*{20}{c}}
   {}  \\
\end{array}t \ne {t_k} \\
 \Delta {\bf{e}}(t) = {\bm {C_k}}{\bf{e}}(t_k^ - )\begin{array}{*{20}{c}}
   {} & {} & {} & {} & {} & {} & {} & {} & {} & {} & {} & {} \\
\end{array}t = {t_k} \\
 \end{array} \right.,
\end{equation}
where ${{\bf{e}}\left( t \right) = {\bf{x}}\left( t \right) - {\bf{x'}}\left( t \right)}$, ${{\bm{\varphi }} \left( {{\bf{x}}(t),{\bf{x'}}(t)} \right) = {\bm{\varphi }} \left({{\bf{x}}(t)} \right)-}$ ${ {\bm{\varphi }} \left( {{\bf{x'}}(t)} \right)}$.

As long as the error system \eqref{4} is asymptotically stable at the origin, system \eqref{eq13} synchronize with system \eqref{eq12}. The uniform asymptotic stability theorem of error system \eqref{4} is given as follows.
\begin{theorem}
Considering the error system \eqref{4}, if the following conditions are satisfied:\\
(1) There exists a constant ${{L_1}}$ so that ${{\left\| {{\bm{\varphi }} \left( {\bf{x}\left( t \right)} \right)} \right\|^2} \le {L_1}{\left\| \bf{x} \right\|^2}}$\\
(2) ${M = \frac{{{\lambda _{\max }}\left( {{{\bm A}^T}{\bm P} + {\bm P}{\bm A}} \right) + 2{\lambda _{\max }}\left( {{{\bm P}^T}{\bm P}} \right)}}{{{\lambda _{\min }}\left( {\bm P}\right)}} +\frac{{{L_1}}}{{{\lambda _{\min }}\left( {\bm P} \right)}} + }$\\ ${\frac{{{{\left\| {\bm B} \right\|}^2}{\lambda _{\max }}\left( {\bm P} \right){\lambda _{\min }}\left( {\bm P} \right)}}{{{\lambda _{\max }}\left( {{{\left( {{\bm I} + {\bm {C_k}}} \right)}^T}{\bm P}\left( {{\bm I} + {\bm {C_k}}} \right)} \right)}} > 0}$\\
${0 < \delta  <  - \frac{{\ln \left( {{\lambda _{\max }}\left( {{{\left( {{\bm I} + {\bm {C_k}}} \right)}^T}{\bm P}\left( {{\bm I} + {\bm {C_k}}} \right)} \right)/{\lambda _{\min }}\left( {\bm P} \right)} \right)}}{M}}$,\\
where ${{\bm P}}$ is symmetric and positive definited matrix, ${{\lambda _{\max }}( \cdot )}$ is the maximum eigenvalue of the matrix, ${{\lambda _{\min }}( \cdot )}$ is the minimum eigenvalue of the matrix, and ${\delta}$ is the impulse interval.\\
Then, error system \eqref{4} is uniform asymptotically stable.\\
\end{theorem}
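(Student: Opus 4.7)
The plan is to apply Lemma~1 to the error system (4) using the quadratic Lyapunov candidate $V(t,{\bf e})={\bf e}^T{\bm P}{\bf e}$. Condition~(1) of the lemma follows immediately from the Rayleigh sandwich $\lambda_{\min}({\bm P})|{\bf e}|^2\le V\le\lambda_{\max}({\bm P})|{\bf e}|^2$, which supplies the class-$K_1$ pair $a(s)=\lambda_{\max}({\bm P})s^2$ and $b(s)=\lambda_{\min}({\bm P})s^2$. At an impulse instant, ${\bf e}(t_k^+)=({\bm I}+{\bm{C_k}}){\bf e}(t_k^-)$, and therefore $V(t_k,{\bf e}(t_k^+))\le\lambda_{\max}(({\bm I}+{\bm{C_k}})^T{\bm P}({\bm I}+{\bm{C_k}}))|{\bf e}(t_k^-)|^2\le q\,V(t_k^-,{\bf e}(t_k^-))$ with $q=\lambda_{\max}(({\bm I}+{\bm{C_k}})^T{\bm P}({\bm I}+{\bm{C_k}}))/\lambda_{\min}({\bm P})$. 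This identifies the linear choice $g(s)=qs\in K_3$ demanded by condition~(3) of the lemma.

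Between impulses I would differentiate $V$ along (4), obtaining $D^+V={\bf e}^T({\bm A}^T{\bm P}+{\bm P}{\bm A}){\bf e}+2{\bf e}^T{\bm P}{\bm B}{\bf e}(t-\tau)+2{\bf e}^T{\bm P}{\bm{\varphi }}({\bf x},{\bf x'})$, and absorb the two cross terms with the Young-type inequality $2u^Tv\le u^Tu+v^Tv$ applied to $(u,v)=({\bm P}^T{\bf e},{\bm B}{\bf e}(t-\tau))$ and $(u,v)=({\bm P}^T{\bf e},{\bm{\varphi }})$. Combined with hypothesis~(1) of the theorem, applied to the difference ${\bm{\varphi }}({\bf x})-{\bm{\varphi }}({\bf x'})$ to yield $\|{\bm{\varphi }}({\bf x},{\bf x'})\|^2\le L_1|{\bf e}|^2$, this delivers $D^+V\le[\lambda_{\max}({\bm A}^T{\bm P}+{\bm P}{\bm A})+2\lambda_{\max}({\bm P}^T{\bm P})+L_1]|{\bf e}|^2+\|{\bm B}\|^2|{\bf e}(t-\tau)|^2$. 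The crucial step is then the Razumikhin substitution: whenever the trigger $V({\bf e}(t))\ge g(V({\bf e}(t+s)))$ holds for all $s\in[-\tau,0]$, taking $s=-\tau$ gives $V({\bf e}(t-\tau))\le V({\bf e}(t))/q$, and passing through the Rayleigh sandwich yields $|{\bf e}(t-\tau)|^2\le\lambda_{\max}({\bm P})|{\bf e}(t)|^2/(q\lambda_{\min}({\bm P}))$. Substituting, converting the right-hand side back to $V$ via $V\ge\lambda_{\min}({\bm P})|{\bf e}|^2$, and regrouping reproduces the three summands of $M$, so that $D^+V\le M\,V$: this is condition~(2) of the lemma with $p(t)\equiv M$ and $c(s)=s$.

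It remains to verify the quantitative hypothesis~(4). With the choices above, $M_1=\sup_{t>0}\int_t^{t+\delta}M\,ds=M\delta$, while $M_2=\inf_{s>0}\int_{qs}^{s}du/u=-\ln q=-\ln(\lambda_{\max}(({\bm I}+{\bm{C_k}})^T{\bm P}({\bm I}+{\bm{C_k}}))/\lambda_{\min}({\bm P}))$, independent of $s$. The required strict inequality $M_2>M_1$ then translates to exactly the upper bound on $\delta$ announced in the theorem, with $q<1$ implicit so that the bound is positive, and hypothesis~(2) of the theorem certifies $M>0$ so that the bound is nondegenerate. All four hypotheses of Lemma~1 are then in force, yielding uniform asymptotic stability of (4).

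The principal obstacle will be the bookkeeping: the two symmetric Young splittings combined with the Razumikhin substitution must condense the residual $\|{\bm B}\|^2|{\bf e}(t-\tau)|^2$ contribution \emph{precisely} into the third summand of $M$ as stated. An asymmetric split introducing a free Young parameter $\alpha>0$ would leave an extra degree of freedom to be optimized against $\delta$, so the symmetric choice is the one implicitly fixed by the theorem's stated form of $M$; the role of hypothesis~(2) is then simply to certify that after this bookkeeping the rate $M$ governing $D^+V\le M\,V$ is positive, so that the admissible impulse interval $\delta$ is a genuine positive real number.
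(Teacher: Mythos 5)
Your proposal is correct and follows essentially the same route as the paper: the quadratic Lyapunov function $V={\bf e}^T{\bm P}{\bf e}$, the linear Razumikhin comparison $g(s)=qs$ with $q=\lambda_{\max}(({\bm I}+{\bm C_k})^T{\bm P}({\bm I}+{\bm C_k}))/\lambda_{\min}({\bm P})$ at the impulse instants, symmetric Young splittings plus the Razumikhin substitution to get $D^+V\le MV$ with $c(s)=s$, and the computation $M_1=M\delta$, $M_2=-\ln q$ so that $M_2>M_1$ yields exactly the stated bound on $\delta$. The only caveat is the one you already flag: your delayed-term coefficient comes out as $\|{\bm B}\|^2\lambda_{\max}({\bm P})/\bigl(\lambda_{\min}({\bm P})\,\lambda_{\max}(({\bm I}+{\bm C_k})^T{\bm P}({\bm I}+{\bm C_k}))\bigr)$ rather than the theorem's $\|{\bm B}\|^2\lambda_{\max}({\bm P})\lambda_{\min}({\bm P})/\lambda_{\max}(({\bm I}+{\bm C_k})^T{\bm P}({\bm I}+{\bm C_k}))$, a discrepancy by powers of $\lambda_{\min}({\bm P})$ that is also present (unjustified) in the paper's own derivation and is immaterial in the application where ${\bm P}={\bm I}$.
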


\begin{proof}
Select Lyapunov function candidate as:

\begin{equation}
V\left( x \right) = {{\bf{e}}^T}{\bm {P}}{\bf{e}}
\end{equation}

For ${t = {t_k}}$:

\begin{small}
\begin{equation}
\begin{array}{l}
 V\left( {{t_k},{\bf{e}}\left( {{t_k}} \right)} \right) = {{\bf{e}}^T}\left( {{t_k}} \right)P{\bf{e}}\left( {{t_k}} \right) \\
 \begin{array}{*{20}{c}}
   {} & {} & {} & {} & {} & {} & {} & {} \\
\end{array} = {\left[ {\left( {I + {C_k}} \right){\bf{e}}\left( {t_k^ - } \right)} \right]^T}P\left[ {\left( {I + {C_k}} \right){\bf{e}}\left( {t_k^ - } \right)} \right] \\
 \begin{array}{*{20}{c}}
   {} & {} & {} & {} & {} & {} & {} & {} \\
\end{array} = {\bf{e}}{\left( {t_k^ - } \right)^T}\left( {{{\left( {I + {C_k}} \right)}^T}P\left( {I + {C_k}} \right)} \right){\bf{e}}\left( {t_k^ - } \right) \\
 \begin{array}{*{20}{c}}
   {} & {} & {} & {} & {} & {} & {} & {} \\
\end{array} \le \left[ {{{{\lambda _{\max }}\left( {{{\left( {I + {C_k}} \right)}^T}P\left( {I + {C_k}} \right)} \right)} \mathord{\left/
 {\vphantom {{{\lambda _{\max }}\left( {{{\left( {I + {C_k}} \right)}^T}P\left( {I + {C_k}} \right)} \right)} {{\lambda _{\min }}\left( P \right)}}} \right.
 \kern-\nulldelimiterspace} {{\lambda _{\min }}\left( P \right)}}} \right]\\
   \begin{array}{*{20}{c}}
   {} & {} & {} & {} & {} & {} & {} & {} & {} & {}& {} & {}\\
\end{array} \cdot {\bf{e}}{\left( {t_k^ - } \right)^T}P{\bf{e}}\left( {t_k^ - } \right) \\
 \begin{array}{*{20}{c}}
   {} & {} & {} & {} & {} & {} & {} & {} \\
\end{array}  \le \left[ {{{{\lambda _{\max }}\left( {{{\left( {I + {C_k}} \right)}^T}P\left( {I + {C_k}} \right)} \right)} \mathord{\left/
 {\vphantom {{{\lambda _{\max }}\left( {{{\left( {I + {C_k}} \right)}^T}P\left( {I + {C_k}} \right)} \right)} {{\lambda _{\min }}\left( P \right)}}} \right.
 \kern-\nulldelimiterspace} {{\lambda _{\min }}\left( P \right)}}} \right]\\
  \begin{array}{*{20}{c}}
   {} & {} & {} & {} & {} & {} & {} & {} & {} & {}& {} & {} \\
\end{array} \cdot V\left( {t_k^ - ,{\bf{e}}\left( {t_k^ - } \right)} \right) \\
 \begin{array}{*{20}{c}}
   {} & {} & {} & {} & {} & {} & {} & {} \\
\end{array} = g\left( {V\left( {t_k^ - ,{\bf{e}}\left( {t_k^ - } \right)} \right)} \right) \\
 \end{array}
\end{equation}
\end{small}
here ${g\left( {\bf{V}} \right) = \frac{{{\lambda _{\max }}\left( {{{\left( {{\bm I} + {\bm {C_k}}} \right)}^T}{\bm P}\left( {{\bm I} + {\bm {C_k}}} \right)} \right)}}{{{\lambda _{\min }}\left( {\bm P} \right)}}{\bf{V}}}$.

If ${V\left( {t,{\bf{e}}\left( t \right)} \right) \ge g\left( {V\left( {t + s,{\bf{e}}\left( {t + s} \right)} \right)} \right)}$, for ${- \tau  \le s \le 0}$,
then

$\begin{array}{l}
 V\left( {t,{\bf{e}}\left( t \right)} \right) \ge
 \left( {{\lambda _{\max }}\left( {{{\left( {{\bm I} + {\bm {C_k}}} \right)}^T}{\bm P}\left( {{\bm I} + {\bm {C_k}}} \right)} \right)/{\lambda _{\min }}\left( {\bm P} \right)} \right) \cdot\\
 V\left( {t + s,{\bf{e}}\left( {t + s} \right)} \right). \\
 \end{array}$
%Therefore
%
%
%\begin{equation}\label{7}
%\begin{array}{l}
% V\left( {t,{\bf{e}}\left( t \right)} \right) \ge  \left( {{\lambda _{\max }}\left( {{{\left( {{\bm I} + {\bm {C_k}}} \right)}^T}{\bm P}\left( {{\bm I} + {\bm {C_k}}} \right)} \right) \cdot\\ {\lambda _{\max }}\left( {\bm P} \right)/{\lambda _{\min }}\left( {\bm P} \right)} \right) \cdot
% {\rm{                     }}{\left\| {{\bf{e}}\left( {t + s} \right)} \right\|^2} \\
% \end{array}
%\end{equation}

For ${t \ne {t_k}}$:\\
\begin{equation}\label{eq8}
\begin{array}{l}
{D^ + }V({\bf{e}}) =  \\
 {{\bf{e}}^T}\left( {{A^T}P + PA} \right){\bf{e}} + 2{{\bf{e}}^T}PB{\bf{e}}\left( {t - \tau } \right) + 2{{\bf{e}}^T}P\varphi \left( {{\bf{e}}\left( t \right)} \right) \\
 \mathop {}\nolimits^{} \mathop {}\limits^{} \mathop {}\nolimits^{} \mathop {}\limits^{} \mathop {}\nolimits^{} \mathop {}\limits^{} \mathop {}\limits^{} \mathop {}\nolimits^{} \mathop {}\limits^{} \mathop {}\nolimits^{} \mathop {}\limits^{}  \le \left[ {{\lambda _{max}}\left( {{A^T}P + PA} \right)/{\lambda _{min}}\left( P \right)} \right]{{\bf{e}}^T}P{\bf{e}} \\
 \mathop {}\nolimits^{} \mathop {}\limits^{} \mathop {}\nolimits^{} \mathop {}\nolimits^{} \mathop {}\limits^{} \mathop {}\nolimits^{} \mathop {}\nolimits^{} \mathop {}\limits^{} \mathop {}\nolimits^{} \mathop {}\nolimits^{} \mathop {}\limits^{} \mathop {}\nolimits^{} \mathop {}\nolimits^{} \mathop {}\limits^{} \mathop {}\nolimits^{}  + 2{\left\| {P{\bf{e}}} \right\|^2} + {\left\| {B{\bf{e}}\left( {t - \tau } \right)} \right\|^2} + {\left\| {\varphi \left( {{\bf{e}}\left( t \right)} \right)} \right\|^2} \\
 \mathop {}\nolimits^{} \mathop {}\limits^{} \mathop {}\nolimits^{} \mathop {}\limits^{} \mathop {}\nolimits^{} \mathop {}\limits^{} \mathop {}\limits^{} \mathop {}\nolimits^{} \mathop {}\limits^{} \mathop {}\nolimits^{} \mathop {}\limits^{}  \le \left[ {{{{\lambda _{max}}\left( {{A^T}P + PA} \right)} \mathord{\left/
 {\vphantom {{{\lambda _{max}}\left( {{A^T}P + PA} \right)} {{\lambda _{min}}\left( P \right)}}} \right.
 \kern-\nulldelimiterspace} {{\lambda _{min}}\left( P \right)}}} \right]V\left( {t,{\bf{e}}\left( t \right)} \right) \\
 \mathop {}\nolimits^{} \mathop {}\limits^{} \mathop {}\nolimits^{} \mathop {}\nolimits^{} \mathop {}\limits^{} \mathop {}\nolimits^{} \mathop {}\nolimits^{} \mathop {}\limits^{} \mathop {}\nolimits^{} \mathop {}\nolimits^{} \mathop {}\limits^{} \mathop {}\nolimits^{} \mathop {}\nolimits^{} \mathop {}\limits^{} \mathop {}\nolimits^{}  + 2{\left\| {P{\bf{e}}} \right\|^2} + {\left\| B \right\|^2}{\left\| {{\bf{e}}\left( {t - \tau } \right)} \right\|^2} + {L_1}{\left\| {{\bf{e}}\left( t \right)} \right\|^2} \\
 \mathop {}\nolimits^{} \mathop {}\limits^{} \mathop {}\nolimits^{} \mathop {}\limits^{} \mathop {}\nolimits^{} \mathop {}\limits^{} \mathop {}\limits^{} \mathop {}\nolimits^{} \mathop {}\limits^{} \mathop {}\nolimits^{} \mathop {}\limits^{}  \le \left[ {{{{\lambda _{max}}\left( {{A^T}P + PA} \right)} \mathord{\left/
 {\vphantom {{{\lambda _{max}}\left( {{A^T}P + PA} \right)} {{\lambda _{min}}\left( P \right)}}} \right.
 \kern-\nulldelimiterspace} {{\lambda _{min}}\left( P \right)}}} \right]V\left( {t,{\bf{e}}\left( t \right)} \right) \\
 \mathop {}\nolimits^{} \mathop {}\limits^{} \mathop {}\nolimits^{} \mathop {}\nolimits^{} \mathop {}\limits^{} \mathop {}\nolimits^{} \mathop {}\nolimits^{} \mathop {}\limits^{} \mathop {}\nolimits^{} \mathop {}\nolimits^{} \mathop {}\limits^{} \mathop {}\nolimits^{} \mathop {}\nolimits^{} \mathop {}\limits^{} \mathop {}\nolimits^{}  + 2\left[ {{{{\lambda _{max}}\left( {{P^T}P} \right)} \mathord{\left/
 {\vphantom {{{\lambda _{max}}\left( {{P^T}P} \right)} {{\lambda _{min}}\left( P \right)}}} \right.
 \kern-\nulldelimiterspace} {{\lambda _{min}}\left( P \right)}}} \right]V\left( {t,{\bf{e}}\left( t \right)} \right) \\
 \mathop {}\nolimits^{} \mathop {}\limits^{} \mathop {}\nolimits^{} \mathop {}\nolimits^{} \mathop {}\limits^{} \mathop {}\nolimits^{} \mathop {}\nolimits^{} \mathop {}\limits^{} \mathop {}\nolimits^{} \mathop {}\nolimits^{} \mathop {}\limits^{} \mathop {}\nolimits^{} \mathop {}\nolimits^{} \mathop {}\limits^{} \mathop {}\nolimits^{}  + \left[ {{{{L_1}V\left( {t,{\bf{e}}\left( t \right)} \right)} \mathord{\left/
 {\vphantom {{{L_1}V\left( {t,{\bf{e}}\left( t \right)} \right)} {{\lambda _{min}}\left( P \right)}}} \right.
 \kern-\nulldelimiterspace} {{\lambda _{min}}\left( P \right)}}} \right] + {\left\| B \right\|^2}{\left\| {{\bf{e}}\left( {t - \tau } \right)} \right\|^2} \\
 \mathop {}\nolimits^{} \mathop {}\limits^{} \mathop {}\nolimits^{} \mathop {}\limits^{} \mathop {}\nolimits^{} \mathop {}\limits^{} \mathop {}\limits^{} \mathop {}\nolimits^{} \mathop {}\limits^{} \mathop {}\nolimits^{} \mathop {}\limits^{}  \le \left[ {\left( {{\lambda _{max}}\left( {{A^T}P + PA} \right) + 2{\lambda _{max}}\left( {{P^T}P} \right) + {L_1}} \right)} \right. \\
 \mathop {}\nolimits^{} \mathop {}\limits^{} \mathop {}\nolimits^{} \mathop {}\nolimits^{} \mathop {}\limits^{} \mathop {}\nolimits^{} \mathop {}\nolimits^{} \mathop {}\limits^{} \mathop {}\nolimits^{} \mathop {}\nolimits^{} \mathop {}\limits^{} \mathop {}\nolimits^{} \mathop {}\nolimits^{} \mathop {}\limits^{} \mathop {}\nolimits^{} \left. {{{} \mathord{\left/
 {\vphantom {{} {{\lambda _{min}}\left( P \right)}}} \right.
 \kern-\nulldelimiterspace} {{\lambda _{min}}\left( P \right)}}} \right] \cdot V\left( {t,e\left( t \right)} \right) + {\left\| B \right\|^2}{\left\| {e\left( {t - \tau } \right)} \right\|^2} \\
 \mathop {}\nolimits^{} \mathop {}\limits^{} \mathop {}\nolimits^{} \mathop {}\limits^{} \mathop {}\nolimits^{} \mathop {}\limits^{} \mathop {}\limits^{} \mathop {}\nolimits^{} \mathop {}\limits^{} \mathop {}\nolimits^{} \mathop {}\limits^{}  \le p(t) \cdot V\left( {t,e\left( t \right)} \right) \\
 \end{array}
\end{equation}

\begin{flushleft}
where ${
p\left( t \right) = \frac{{{\lambda _{\max }}\left( {{A^T}P + PA} \right) + 2{\lambda _{\max }}\left( {{P^T}P} \right) + {L_1}}}{{{\lambda _{\min }}\left( P \right)}}
}$
\end{flushleft}

\begin{center}
${
+ \frac{{{{\left\| B \right\|}^2}{\lambda _{\max }}\left( P \right){\lambda _{\min }}\left( P \right)}}{{{\lambda _{\max }}\left( {{{\left( {I + {C_k}} \right)}^T}P\left( {I + {C_k}} \right)} \right)}}}$
\end{center}

assume ${c\left( s \right) = s,p\left( t \right) = M}$

According to condition (4) of Lemma 1, we have:

\begin{equation}
\begin{array}{l}
 {M_2} - {M_1} = \mathop {\inf }\limits_{q > 0} \smallint _{g\left( q \right)}^q\frac{{ds}}{{c\left( s \right)}} - \mathop {\sup }\limits_{t > 0} \int\limits_t^{t + \delta } {p\left( s \right)ds}  \\
  \mathop {}\nolimits^{} \mathop {}\nolimits^{} \mathop {}\nolimits^{} \mathop {}\nolimits^{} \mathop {}\nolimits^{} \mathop {}\nolimits^{} \mathop {}\nolimits^{} \mathop {}\nolimits^{} \mathop {}\nolimits^{} \mathop {}\nolimits^{} \mathop {}\nolimits^{} \mathop {}\nolimits^{} \mathop {}\nolimits^{} \mathop {}\nolimits^{} \mathop {}\nolimits^{} \mathop {}\nolimits^{} \mathop {}\nolimits^{} \mathop {}\nolimits^{} \mathop {}\nolimits^{} \mathop {}\nolimits^{}  = \ln q - \ln g\left( q \right) - M\delta  \\
   \mathop {}\nolimits^{} \mathop {}\nolimits^{} \mathop {}\nolimits^{} \mathop {}\nolimits^{} \mathop {}\nolimits^{} \mathop {}\nolimits^{} \mathop {}\nolimits^{} \mathop {}\nolimits^{} \mathop {}\nolimits^{} \mathop {}\nolimits^{} \mathop {}\nolimits^{} \mathop {}\nolimits^{} \mathop {}\nolimits^{} \mathop {}\nolimits^{} \mathop {}\nolimits^{} \mathop {}\nolimits^{} \mathop {}\nolimits^{} \mathop {}\nolimits^{} \mathop {}\nolimits^{} \mathop {}\nolimits^{}
   =- \ln \frac{{{\lambda _{\max }}\left( {{{\left( {{\bm I} + {\bm {C_k}}} \right)}^T}{\bm P}\left( {{\bm I} + {\bm {C_k}}} \right)} \right)}}{{{\lambda _{\min }}\left( \bm{P} \right)}} - M\delta  > 0
\end{array}
\end{equation}

According to condition (2) of Theorem 1, if the following conditions are established:

\begin{equation}
\begin{array}{l}
 M = \frac{{{\lambda _{\max }}\left( {{A^T}P + PA} \right) + 2{\lambda _{\max }}\left( {{P^T}P} \right)}}{{{\lambda _{\min }}\left( P \right)}} + \frac{{{L_1}}}{{{\lambda _{\min }}\left( P \right)}} \\
% \begin{array}{*{20}{c}}
%   {} & {}  \\
%\end{array} + \frac{{{{\left\| B \right\|}^2}{\lambda _{\max }}\left( P \right){\lambda _{\min }}\left( P \right)}}{{{\lambda _{\max }}\left( {{{\left( {I + {C_k}} \right)}^T}P\left( {I + {C_k}} \right)} \right)}} > 0 \\
 \end{array}
\end{equation}

${
 + \frac{{{{\left\| B \right\|}^2}{\lambda _{\max }}\left( P \right){\lambda _{\min }}\left( P \right)}}{{{\lambda _{\max }}\left( {{{\left( {I + {C_k}} \right)}^T}P\left( {I + {C_k}} \right)} \right)}} > 0.
}$

${\begin{array}{*{20}{c}}
   {} & {}  \\
\end{array} }$

${0 < \delta  <  - \frac{{\ln \left( {{\lambda _{\max }}\left( {{{\left( {I + {C_k}} \right)}^T}P\left( {I + {C_k}} \right)} \right)/{\lambda _{\min }}\left( P \right)} \right)}}{M}.}$

Then, we have a conclusion that the error system is asymptotically stability.
\end{proof}

\noindent
\textbf{Corollary} \emph{Theorem 1 can be extended to the delay differential equations with multiple time delay for different linear items.}

It is clear that conditions of Theorem 1 is independent of ${\tau}$. We present Theorem 2 about the system with multiple different time delays.

Considering the system

\begin{equation}\label{eqr14}
{\bf{\dot x}}(t) = {\bm A}{\bf{x}}(t) + {\bm B}{\bf{x}}(t - \tau ) + {\bm{\varphi }} ({\bf{x}}(t),{\bf{x}}(t - \tau ))
\end{equation}

\noindent which contains two different time delays, then we have

\begin{theorem}
Considering error system of the equation (\ref{eqr14}), if the following conditions are satisfied:\\
(1) There exist constants ${{L_1}}$ and ${L_2}$ so that ${{\left\| {{\bm{\varphi }} ({\bf x}(t),{\bf x}(t - \tau ))} \right\|^2} \le}$ ${ {L_1}{\left\| {{\bf x}(t)} \right\|^2} + {L_2}{\left\| {{\bf x}(t - \tau )} \right\|^2}}$\\
(2) ${M = {\lambda _{\max }}({{\bm A}^T}{\bm P} + {\bm P}{\bm A})/{\lambda _{\min }}({\bm P})+2{\lambda _{\max }}({{\bm P}^T}{\bm P})/{\lambda _{\min }}({\bm P}) + }$  ${{L_1}/{\lambda _{\min }}({\bm P}){\rm{  + }}({\left\| {\bm B} \right\|^2} + {L_2}){\lambda _{\max }}({\bm P}){\lambda _{\min }}({\bm P})/{\lambda _{\max }}({({\bm I} + {\bm {C_k}})^T}{\bm P}}$ ${({\bm I} + {\bm {C_k}}))}$ \\
${0 < \delta  < \frac{{ - \ln ({\lambda _{\max }}({{({\bm I} + {\bm {C_k}})}^T}{\bm P}({\bm I} + {\bm {C_k}}))/{\lambda _{\min }}({\bm P}))}}{M}}$,\\
where matrix ${{\bm P}}$ is positive definite, then the error system is uniform asymptotically stable.\\
\end{theorem}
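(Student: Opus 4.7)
The plan is to follow the template of the proof of Theorem 1 almost verbatim, using the quadratic Lyapunov function $V(\mathbf{e}) = \mathbf{e}^T \bm{P}\mathbf{e}$ and checking the hypotheses of Lemma 1 with $c(s)=s$ and the same jump estimator $g(V) = [\lambda_{\max}((\bm{I}+\bm{C_k})^T \bm{P}(\bm{I}+\bm{C_k}))/\lambda_{\min}(\bm{P})]\,V$. Because the impulse rule $\Delta\mathbf{e} = \bm{C_k}\mathbf{e}(t_k^-)$ is identical to the single-delay case, the bound $V(t_k,\mathbf{e}(t_k)) \le g(V(t_k^-,\mathbf{e}(t_k^-)))$ carries over unchanged; condition (3) of Lemma 1 is therefore free. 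Only the continuous-time estimate of $D^{+}V$ has to be redone, because the Lipschitz hypothesis now reads $\|\bm{\varphi}(\mathbf{x}(t),\mathbf{x}(t-\tau)) - \bm{\varphi}(\mathbf{x}'(t),\mathbf{x}'(t-\tau))\|^2 \le L_1\|\mathbf{e}(t)\|^2 + L_2\|\mathbf{e}(t-\tau)\|^2$.

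For $t\neq t_k$ I would expand $D^{+}V = \mathbf{e}^T(\bm{A}^T\bm{P}+\bm{P}\bm{A})\mathbf{e} + 2\mathbf{e}^T\bm{P}\bm{B}\mathbf{e}(t-\tau) + 2\mathbf{e}^T\bm{P}\bm{\varphi}$ and apply the same $2a^Tb \le \|a\|^2 + \|b\|^2$ splitting used in \eqref{eq8}, which produces two copies of $\|\bm{P}\mathbf{e}\|^2$, one copy of $\|\bm{B}\mathbf{e}(t-\tau)\|^2$, and $\|\bm{\varphi}\|^2$. Substituting the new Lipschitz bound then merges the $L_2\|\mathbf{e}(t-\tau)\|^2$ piece with the $\|\bm{B}\|^2\|\mathbf{e}(t-\tau)\|^2$ piece, so the delayed-state coefficient jumps from $\|\bm{B}\|^2$ in Theorem 1 to $\|\bm{B}\|^2 + L_2$; this is the only structural difference from the single-delay argument.

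Invoking the Razumikhin-type condition $V(t,\mathbf{e}(t)) \ge g(V(t+s,\mathbf{e}(t+s)))$ for $s\in[-\tau,0]$ gives the key estimate $\|\mathbf{e}(t-\tau)\|^2 \le V(t,\mathbf{e}(t))\,\lambda_{\min}(\bm{P})/\lambda_{\max}((\bm{I}+\bm{C_k})^T\bm{P}(\bm{I}+\bm{C_k}))$, trading the delayed state for a multiple of $V(t,\mathbf{e}(t))$. Collecting coefficients yields $D^{+}V \le p(t) V$ with $p(t)\equiv M$ equal to the constant displayed in item (2) of the statement, which verifies Condition (2) of Lemma 1. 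Finally, $c(s)=s$ makes $M_2 = \inf_{q>0}\int_{g(q)}^{q} ds/s = -\ln(\lambda_{\max}((\bm{I}+\bm{C_k})^T\bm{P}(\bm{I}+\bm{C_k}))/\lambda_{\min}(\bm{P}))$ and $M_1 = M\delta$, so the prescribed upper bound on $\delta$ is precisely $M_2 > M_1$; Lemma 1 then delivers uniform asymptotic stability of the origin and hence synchronization.

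The heavy lifting is already absorbed into Lemma 1 and into the scaffolding of Theorem 1, so no new conceptual ingredient is required. The spot that needs the most care is the bookkeeping that absorbs the additional $L_2$ contribution into the existing delayed-state term without introducing a separate hypothesis on $\tau$, and confirming that a single Razumikhin step still closes the inequality; once $\|\bm{B}\|^2 + L_2$ is isolated as the new delayed-state coefficient, everything downstream mirrors Theorem 1 with this substitution.
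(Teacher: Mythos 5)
Your proposal matches the paper's own proof essentially verbatim: the same Lyapunov function $V(\mathbf{e})=\mathbf{e}^{T}\bm{P}\mathbf{e}$, the same jump estimate $g$ carried over from Theorem~1, the same $2a^{T}b\le\|a\|^{2}+\|b\|^{2}$ splitting with the $L_2\|\mathbf{e}(t-\tau)\|^{2}$ term merged into the $\|\bm{B}\|^{2}\|\mathbf{e}(t-\tau)\|^{2}$ term, the same Razumikhin absorption of the delayed state, and the same invocation of Lemma~1 with $c(s)=s$, $p(t)=M$, yielding the bound on $\delta$ from $M_2>M_1$. The only looseness is the exact constant produced by the Razumikhin step (your $\|\mathbf{e}(t-\tau)\|^{2}\le V(t)\,\lambda_{\min}(\bm{P})/\lambda_{\max}\left((\bm{I}+\bm{C_k})^{T}\bm{P}(\bm{I}+\bm{C_k})\right)$ versus the $\lambda_{\max}(\bm{P})\lambda_{\min}(\bm{P})$ factor appearing in the stated $M$), a bookkeeping point the paper itself glosses over and which is immaterial for its choice $\bm{P}=\bm{I}$.
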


\begin{proof}
Lyapunov function candidate is selected as

\begin{equation}\label{12}
V\left( x \right) = {{\bf{e}}^T}{\bm P}{\bf{e}}
\end{equation}

The time derivation of \eqref{12} at ${t={t_k}}$ is the same as Eq. \eqref{eq8}.

For ${t \ne {t_k}}$,

\begin{equation}
\begin{aligned}
\begin{array}{l}
 {D^ + }V({\bm e}) = {{\bm e}^T}({{\bm A}^T}{\bm P} + {\bm P}{\bm A}){\bm e} + 2{{\bm e}^T}{\bm P}{\bm B}{\bf x}(t - \tau ) + \\
 2{{\bm e}^T}{\bm P}{\bm{\varphi }} ({\bm e}(t),{\bm e}(t - \tau )) \\
 \begin{array}{*{20}{c}}
\end{array}  \le ({\lambda _{\max }}({{\bm A}^T}{\bm P} + {\bm P}{\bm A})/{\lambda _{\min }}({\bm P}) +
 2{\lambda _{\max }}({{\bm P}^T}{\bm P})/{\lambda _{\min }}({\bm P}) + \\{L_1}/{\lambda _{\min }}({\bm P})) \cdot V(t,{\bm e}(t))
  + ({\left\| {\bm B} \right\|^2} + {L_2}){\left\| {{\bf x}(t - \tau )} \right\|^2} \\
 \begin{array}{*{20}{c}}
\end{array} \le p(t)V(t,{\bm e}(t)) \\
 \end{array}
\end{aligned}
\end{equation}
where
${p(t) = {\lambda _{\max }}({{\bm A}^T}{\bm P} + {\bm P}{\bm A})/{\lambda _{\min }}({\bm P}) + 2{\lambda _{\max }}({{\bm P}^T}{\bm P})/{\lambda _{\min }}({\bm P})}$ ${+{L_1}/{\lambda _{\min }}({\bm P}) + \\ ({\left\| {\bm B} \right\|^2} + {L_2}){\lambda _{\max }}({\bm P}){\lambda _{\min }}({\bm P})/({\lambda _{\max }}(({\bm I} + {\bm {C_k}}{)^T}{\bm P}({\bm I} + {\bm {C_k}})) > 0 \\
}$
denote ${c(s) = s,p(t) = M}$.

If the following
conditions are established

\begin{center}
\begin{small}
${\begin{array}{l}
 M = {\lambda _{\max }}({{\bm A}^T}{\bm P} + {\bm P}{\bm A})/{\lambda _{\min }}({\bm P}) + 2{\lambda _{\max }}({{\bm P}^T}{\bm P})/{\lambda _{\min }}({\bm P})
  + {L_1}/{\lambda _{\min }}({\bm P} {\rm{ + }}({\left\| {\bm B} \right\|^2} + {L_2}){\lambda _{\max }}({\bm P}){\lambda _{\min }}({\bm P})/{\lambda _{\max }}(({\bm I} + {\bm {C_k}}{)^T}{\bm P}({\bm I} + {\bm {C_k}}))
 \end{array}
}$
\end{small}
\end{center}
\begin{center}
${\begin{array}{l}
0 < \delta  < \frac{{ - \ln ({\lambda _{\max }}({{({\bm I} + {\bm {C_k}})}^T}{\bm P}({\bm I} + {\bm {C_k}}))/{\lambda _{\min }}({\bm P}))}}{M} \\
 \end{array}
}$
\end{center}

Then, with a similar process as the single time delay system, we have a conclusion that the error system of equation (\ref{eqr14}) is asymptotically stability.
\end{proof}

\section{Simulation and circuit experiment results}

\subsection{Hyper-chaotic single-scroll attractor in Chen system with time-delay}
\label{sec:1}
The Chen system is given as follows:

\begin{equation}
\left\{ \begin{array}{l}
 \dot x(t) = a(y(t) - x(t)); \\
 \dot y(t) = (c - a)x(t) - x(t)z(t) + cy(t); \\
 \dot z(t) = x(t)y(t) - bz(t).
\end{array} \right.
\end{equation}

When the parameters ${a = 35,b = 3,c = 18.35978}$, Chen system is non-chaotic. Chen system with linear time delay feedback is given as follows \citep{19}:

\begin{equation}
\left\{ \begin{array}{l}
 \dot x(t) = a(y(t) - x(t)); \\
 \dot y(t) = (c - a)x(t) - x(t)z(t) + cy(t); \\
 \dot z(t) = x(t)y(t) - bz(t) + k(z(t) - z(t - \tau )) \\
 \end{array} \right.
\end{equation}

When the parameters ${k = 2.85,\tau  = 0.3}$, the initial value is ${x(0)=2.27}$, ${y(0)=2.27}$, ${z(0)=1.72}$, ${z(t)=0}$ in ${t \in [-0.3,0)}$, the system exhibits a single-scroll attractor \citep{123add}.

\subsection{Simulation results}
Chen system with linear time delay feedback control can be given as:

\begin{equation}\label{eq1}
{\bf{\dot x}}(t) = {\bm A}{\bf{x}}\left( t \right) + {\bm B}{\bf{x}}(t - \tau ) + {\bm{\varphi }} \left( {{\bf{x}}\left( t \right)} \right),
\end{equation}
where
${{\bf{x}} = \left[ {\begin{array}{*{20}{c}}
   x  \\
   y  \\
   z  \\
\end{array}} \right]}$,
${{\bm A} = \left[ {\begin{array}{*{20}{c}}
   { - 35} & {35} & 0  \\
   { - 17} & {18} & 0  \\
   0 & 0 & {-3+K}  \\
\end{array}} \right]}$,
${{\bm B} = \left[ {\begin{array}{*{20}{c}}
   0 & 0 & 0  \\
   0 & 0 & 0  \\
   0 & 0 & -K  \\
\end{array}} \right]}$,
${{\bm{\varphi }} ({\bf{x}}(t)) = \left[ {\begin{array}{*{20}{c}}
   0  \\
   { - x(t)z(t)}  \\
   {x(t)y(t)}  \\
\end{array}} \right]}$, ${K = 3.8}$, ${\tau  = 0.3}$.

The response system with impulse control is given as follows:
\begin{small}
\begin{equation}\label{eq17q}
\left\{ {\begin{array}{*{20}{c}}
   {{\bf{\dot x'}}(t) = {\bm A}{\bf{x'}}\left( t \right) + {\bm B}{\bf{x'}}(t - \tau ) + {\bm{\varphi }} \left( {{\bf{x'}}\left( t \right)} \right)} & {t \ne {t_k}}  \\
   {\Delta {\bf{x'}} = {\bf{x'}}\left( {t_k^ + } \right) - {\bf{x'}}\left( {t_k^ - } \right) = {\bm C}\left( {{\bf{x'}}\left( {{t_k}} \right) - {\bf{x}}\left( {{t_k}} \right)} \right)} & {t = {t_k}}  \\
\end{array}} \right.,
\end{equation}
\end{small}
where
${{\bm{\varphi }} \left( {{\bf{x'}}\left( t \right)} \right) = \left[ {\begin{array}{*{20}{c}}
   0  \\
   { - x'z'}  \\
   {x'y'}  \\
\end{array}} \right]}$, ${{\bf{x'}} = \left[ {\begin{array}{*{20}{c}}
   {x'}  \\
   {y'}  \\
   {z'}  \\
\end{array}} \right]}$, \\
${\bm A}$ and ${\bm B}$ are the same as that in Eq. (\ref{eq1}).

The error system is given by:
\begin{equation}\label{eq18as}
\left\{ {\begin{array}{*{20}{c}}
   {{\bf{\dot e}}(t) = {\bm A}{\bf{e}}\left( t \right) + {\bm B}{\bf{e}}(t - \tau ) + {\bm{\varphi }} \left( {{\bf{x}},{\bf{x'}}} \right)} & {t \ne {t_k}}  \\
   {\Delta {\bf{e}} = {\bm C}{\bf{e}}\left( {t_k^ - } \right)} & {t = {t_k}}  \\
\end{array}} \right.,
\end{equation}
where ${{\bf{e}}(t) = {\bf{x}}(t) - {\bf{x'}}(t)}$, ${{\bm{\varphi }} \left( {{\bf x},{\bf x}'} \right) = {\bm{\varphi }} \left( {\bf x} \right) - {\bm{\varphi }} \left( {{\bf x}'} \right)}$, ${\bm A}$, ${\bm B}$ are the same as that in Eq. (\ref{eq1}).

Impulse control matrix ${
{\bm C} = \left( {\begin{array}{*{20}{c}}
   { - 0.2} & 0 & 0  \\
   0 & { - 1.8} & 0  \\
   0 & 0 & { - 1.8}  \\
\end{array}} \right)}$, for convenience, let ${{\bf{P}} = I}$.

${{\left\| {{\bm{\varphi }} \left( {{\bf{x}}\left( t \right)} \right)} \right\|^2} \le {L_1}{\left\| {\bf{x}} \right\|^2} = 16{\left\| {\bf{x}} \right\|^2}
}$, thus condition (1) in Theorem 1 is satisfied.

\begin{figure}[!t]
\centering
\includegraphics[width=3.5in]{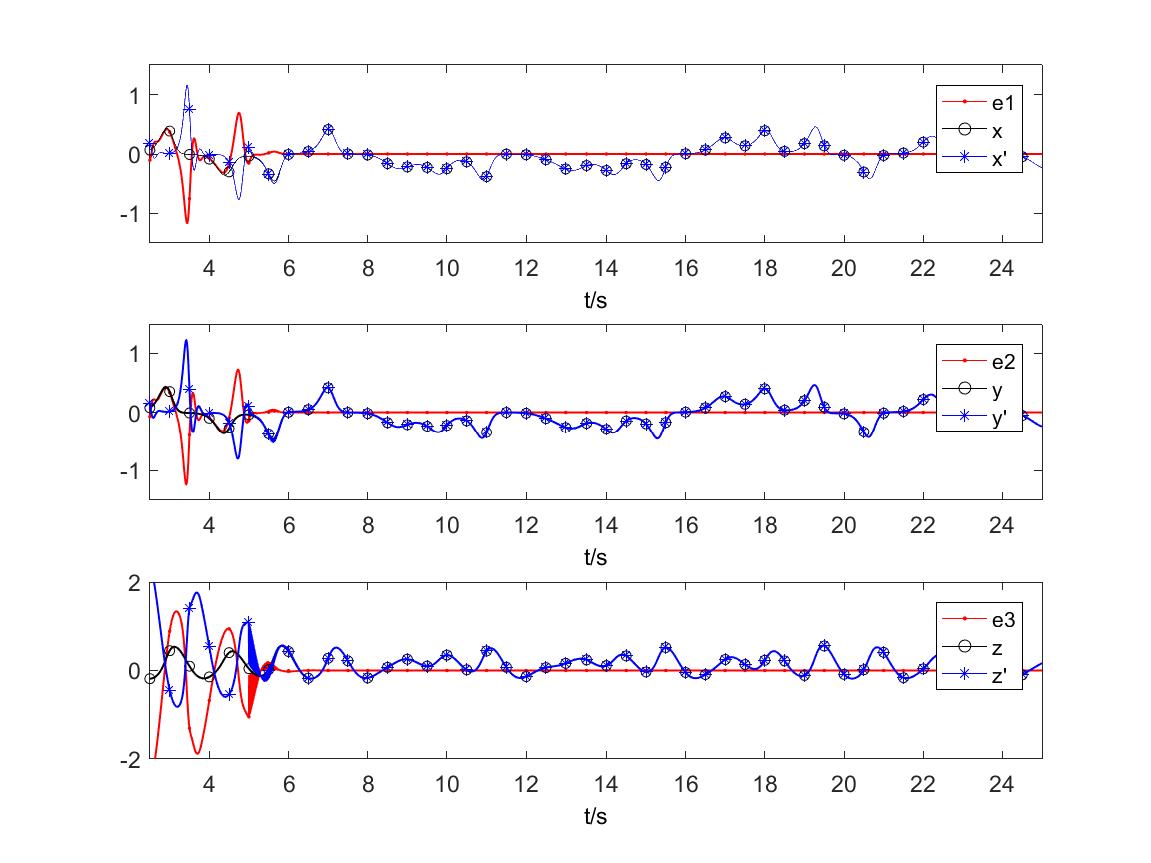}
\caption{ Simulation waveform of synchronization of the delay Chen system using impulse control.}
\label{fig_sim6}
\end{figure}

From condition (2) in Theorem 1, we have\\
(1) ${M = \frac{{{\lambda _{\max }}\left( {{A^T}P + PA} \right) + 2{\lambda _{\max }}\left( {{P^T}P} \right) + {L_1}}}{{{\lambda _{\min }}\left( P \right)}}}$

${\begin{array}{*{20}{c}}
   {} & {} & {} & {}  \\
\end{array}}$

${ + \frac{{{{\left\| B \right\|}^2}{\lambda _{\max }}\left( P \right){\lambda _{\min }}\left( P \right)}}{{{\lambda _{\max }}\left( {{{\left( {I + {C_k}} \right)}^T}P\left( {I + {C_k}} \right)} \right)}}
= \frac{{38.9732 + 2 + 16}}{1} + \frac{{14.44}}{{0.64}}= 79.5357 > 0}$  \\

(2)${0 < \delta  <  - \frac{{\ln \left( {{\lambda _{\max }}\left( {{{\left( {I + {C_k}} \right)}^T}P\left( {I + {C_k}} \right)} \right)/{\lambda _{\min }}\left( P \right)} \right)}}{M} \\ }$ \\
${\begin{array}{*{20}{c}}
   {} & {} & {} & {}  \\
\end{array}}$

${ =  - \frac{{\ln \left( {0.64} \right)}}{{79.5357}} = 0.005611 \\}$.

If we choose the impulse interval ${\delta   = 0.005 < 0.005611}$, then all conditions in Theorem 1 are satisfied. The simulation results using impulse control for ${\delta =0.005}$ are given in Fig. \ref{fig_sim6}, the impulse control is activated from ${t=5s}$. The subplot (a) gives the curves of states $x$ of drive system, $x'$ of response system and the state error ${{e_1}\left( t \right)}$ in black line with circle, blue line with star and red line with dot, respectively. From Fig. \ref{fig_sim6}, we know that the synchronization is achieved after the impulse controller is active at $t=5s$.

\subsection{Circuit experimental results}

The schematic diagram of impulse control circuit is given in Fig. \ref{fig_sim}.

\begin{figure}[h]
\centering
\includegraphics[width=3in]{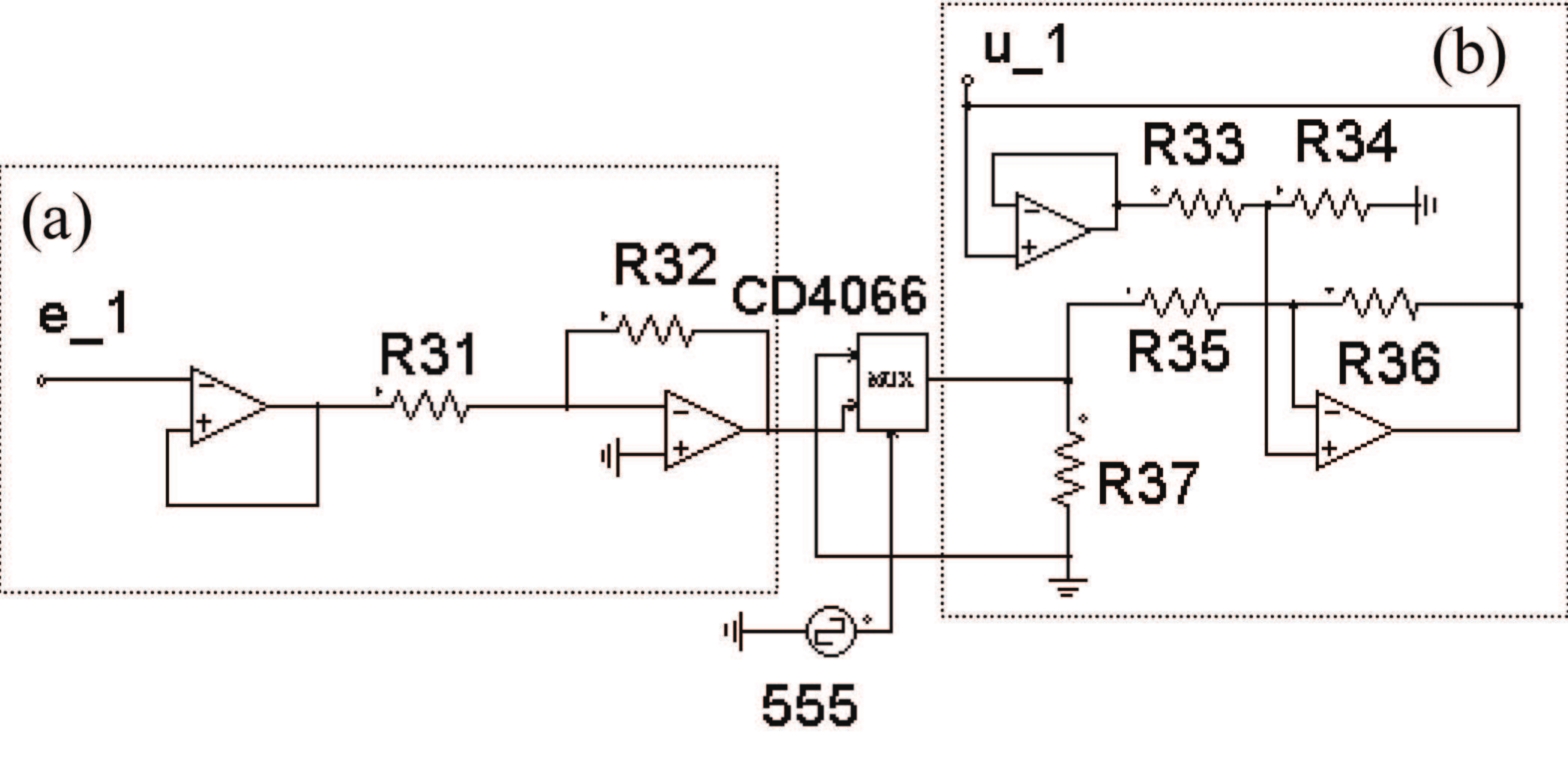}
\caption{ Schematic diagram of the impulse control.}
\label{fig_sim}
\end{figure}

In Fig. \ref{fig_sim}, block (a) is mainly used for the proportional amplification. Error ${e_1} = x - x'$ passes through the proportional amplifier, getting proportional gain ${{c_1} =  - \frac{{{R_{32}}}}{{{R_{31}}}}{\kern 1pt} {\kern 1pt}}$.

The 555 timer provides impulse signal with 10\% duty ratio and 0.005s impulse interval. The Fig. \ref{fig_sim}(a) is regard to the impulse gain $\bm C$ in Eq. (\ref{eq18as}), and the Fig. \ref{fig_sim} is used for signal power amplification. The CD4066 in the Fig. \ref{fig_sim} is employed as the multipath selector, when the switch is on-stated by a high level.

%Multipath selector (CD4066) in Fig. \ref{fig_sim} is used to select the inputs and to send the selected input to the output, when the impulse signal generated by 555 timer is high, the control signal ${u_1=c_1 e_1}$, otherwise, ${u_1=0}$, block (b) in Fig. \ref{fig_sim} is used for signal power amplification.

The schematic diagram of the impulse synchronization is given in Fig. \ref{fig_sim2}. In Fig. \ref{fig_sim2}, signal $x$, $y$ and $z$ are the state variables of the drive system, and ${x'}$, ${y'}$ and ${z'}$ are the state variables of the response system. Let ${
{\bf{e}} = {({e_1},{e_2},{e_3})^T}}$ be the synchronization error. The impulse control outputs are ${u_1}$, ${u_2}$ and ${u_3}$. The circuit parameters are summerized in Table 1.

\begin{figure*}[ht]
\centering
\includegraphics[width=5.5in]{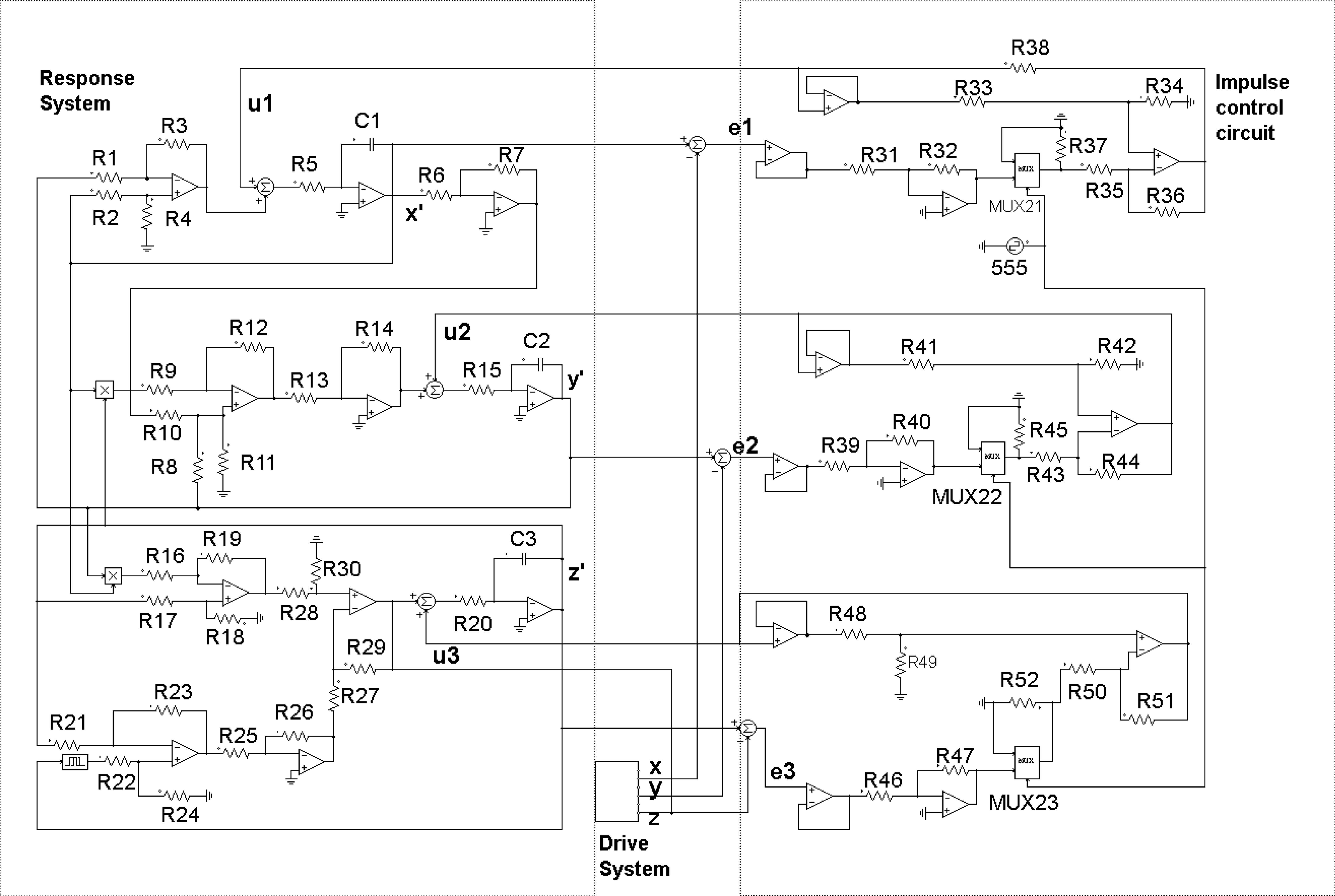}
\caption{The schematic diagram of the impulse control for hyperchaotic Chen circuits synchronization.}
\label{fig_sim2}
\end{figure*}

\begin{table}
\caption{Circuit Component values.}
\centering
\begin{tabular}{l|l}
 Component                    & Value         \\
   \hline
  \hline
  R1, R2, R3, R4, R6, R7, R13, R14, R16, R17,\\
  R19, R21, R22, R23, R24, R27, R28,R29, R30 & 10$k\Omega$\\
  R5, R57                         & 2.86$M\Omega$\\
  %\hline
  R8                          & 16.5$k\Omega$   \\
  %\hline
  R9                      & 3$k\Omega$\\
  R10                       &18.25$k\Omega$ \\
  %\hline
  R11                   & 60$k\Omega$ \\
  %\hline
  R12, R25, R31, R33, R34, R35, R36, R39, R41,\\
  R42, R43, R44, R46, R48, R49, R50, R51                 & 1$k\Omega$ \\
  %\hline
  R15                          & 3.33$M\Omega$ \\
  %\hline
  R18                         & 1.765$k\Omega$ \\
  %\hline
  R20                         & 10$M\Omega$ \\
  %\hline
  R26                         & 285$\Omega$ \\
  %\hline
  %\hline
  R37, R45, R52                      & 12$\Omega$ \\
  R38                          & 100$\Omega$ \\
  R40,R47                       & 1.8$k\Omega$ \\
  R45                          & 0.12$k\Omega$ \\
  C1, C2, C3, C4, C5, C6             & 10000$PF$\\
  %\hline
\end{tabular}
\end{table}

The circuit simulation results (using PSIM software) of the synchronization between the drive system and the response system are given in Fig. \ref{fig4}. The upper, middle and bottom subplot in Fig. \ref{fig4} are the corresponding synchronization error curves $e_1$, $e_2$, $e_3$, respectively. Here, the impulse control is activated after $t=80$ second. We can see, from Fig. \ref{fig4}, that the synchronization error tend to zero after the controller is activated.

In the following, the circuit experiment is built to observe the synchronization between the two hyperchaotic systems. The operation amplifiers used in circuits are LF347N, the multipliers used are AD633JN, the inverter used is 74LS04, the multi-channel selector is CD4066. The experimental results is shown in Fig. \ref{fig_sim11}.

\begin{figure}[t]
\centering
\includegraphics[width=3.5in]{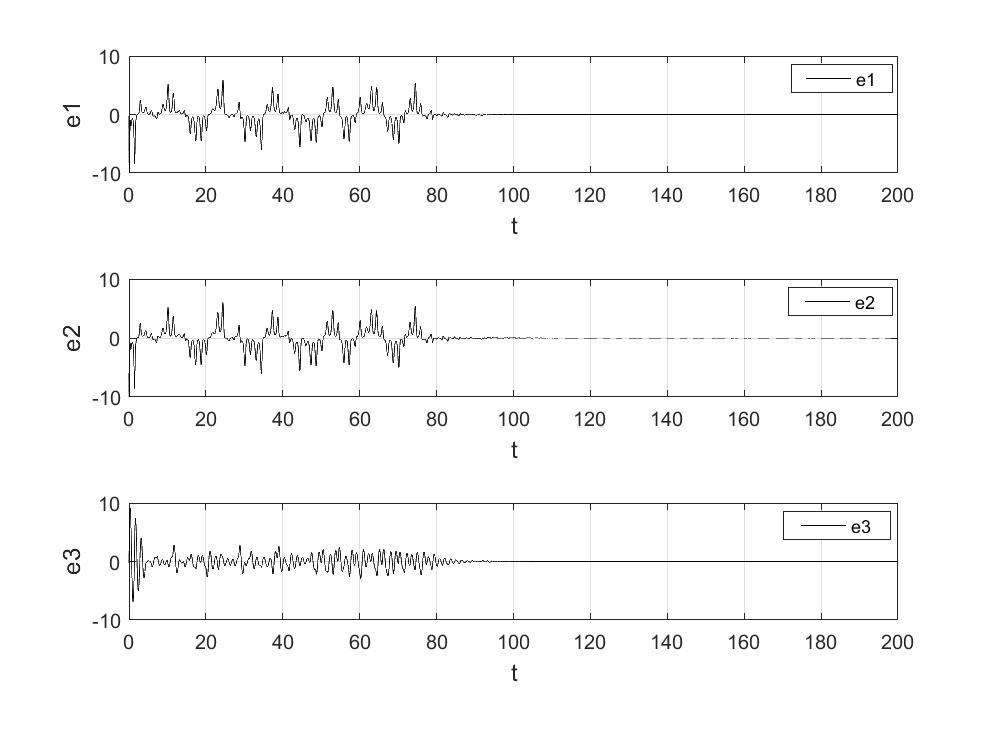}
\caption{ Simulation waveform of synchronization of the delay Chen system using impulse control.}
\label{fig4}
\end{figure}

Subplot (a), (c) and (e) in Fig. \ref{fig_sim11} represent the states of the two systems without the impulse control and the corresponding synchronization errors, before impulse control is put into effect; subplot (b), (d) and (f) represent the states of the two systems with the impulse control and the corresponding synchronization errors after the controller is put into effect. In Fig. \ref{fig_sim11} (a) and (b), channel 1 represents ${x}$ waveform of the drive system,  channel 2 represents ${x'}$ waveform of the response system, channel 4 represents ${{e_1}}$; In Fig. \ref{fig_sim11}(c) and (d), channel 3 represents ${y}$ waveform of the drive system,  channel 2 represents ${y'}$ waveform of the response system, channel 4 represents ${{e_2}}$; In Fig. \ref{fig_sim11} (e) and (f), channel 3 represents ${z}$ waveform of the drive system, channel 1 represents ${z'}$ waveform of the response system, channel 4 represents ${{e_3}}$; From Fig \ref{fig_sim11}, we learn that the synchronization is achieved after the impulse control is put into effect.

\begin{figure}[!t]
\centering
\includegraphics[width=3.4in]{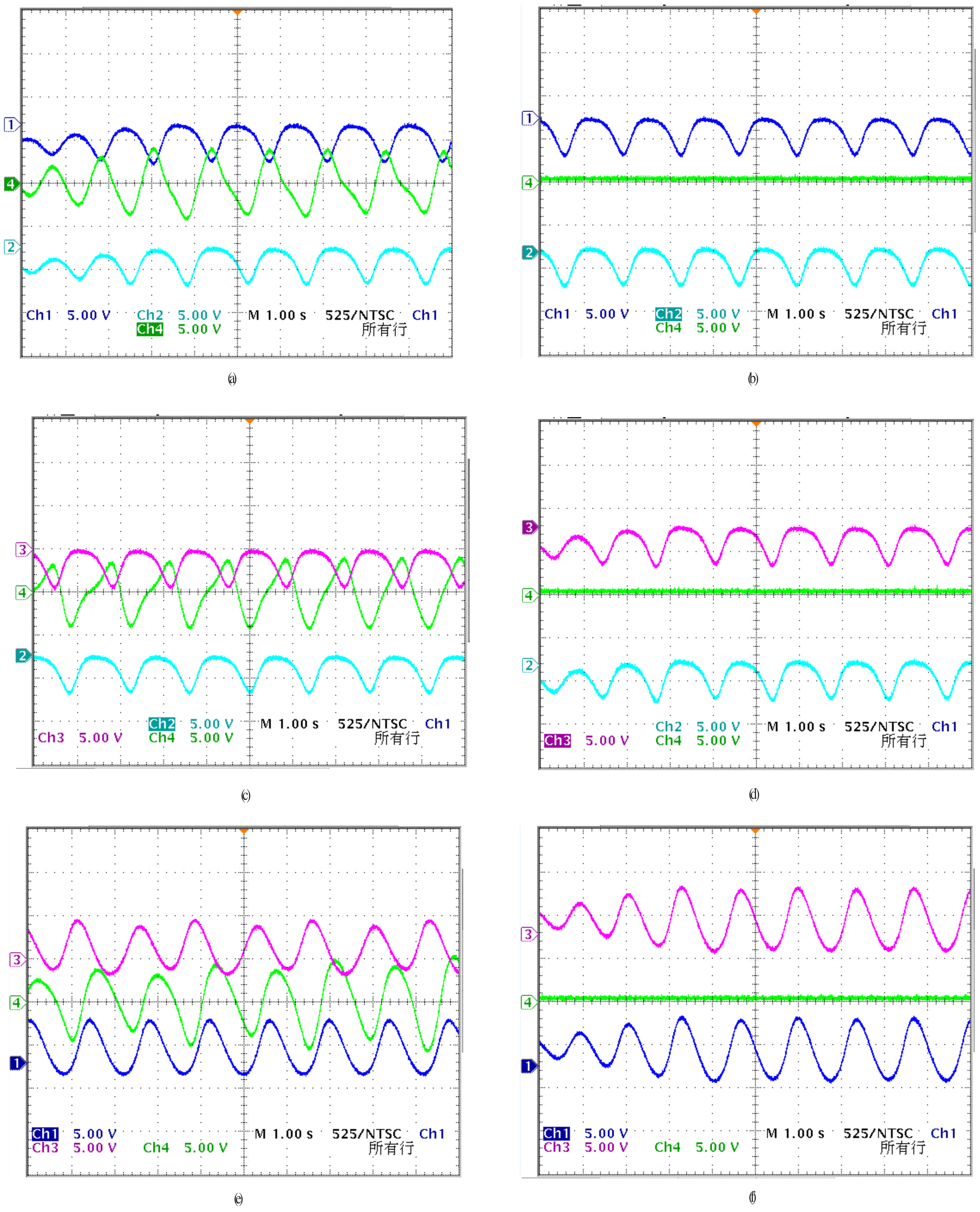}
\caption{ The experimental waveforms.}
\label{fig_sim11}
\end{figure}

\section{Conclusion}

To conclude, in this paper, with the Lyapunov stability theory of impulse delay differential equation, we propose the sufficient condition for the impulse synchronization of the systems with single (multiple) time delay feedback item(s). The proposed method has been applied to the Chen system with time delay over simulation and electrical circuit, which have shown its correctness and effectiveness of the theory. It is worth noticing that the impulse control does contributes to the secure communication, clinical treatment and species population. This method gives better prospects to explore some real world systems.

\section*{Acknowledgement(s)}

The work is supported in part by National Natural Science Foundation of China (60804040, 61172070) and Shaanxi Province Innovation Research Team (2013KCT-04).

%

%
% and use \bibitem to create references. Consult the Instructions
\section{References}

\end{document}